\def\nr#1{\buildrel{#1}\over=}
\def\be{\begin{enumerate}}
\def\ee{\end{enumerate}}
\theoremstyle{plain}
\newtheorem{thm}{Theorem}[section]
\newtheorem{lemma}[thm]{Lemma}
\theoremstyle{definition}
\newtheorem{de}[thm]{Definition}
\newtheorem{example}[thm]{Example}
\numberwithin{equation}{section}
\title[Sharp and principal...]{Sharp and principal elements\\ in effect algebras}
\begin{document}

\author[G. Bi\'nczak]{G. Bi\'nczak$^1$}

\address{$^1$ Faculty of Mathematics and Information Sciences\\
Warsaw University of Technology\\
00-662 Warsaw, Poland}

\author[J. Kaleta]{J. Kaleta$^2$}

\address{$^2$ Department of Applied Mathematics\\
 Warsaw University of Agriculture\\02-787 Warsaw, Poland}%ul Nowoursynowska 166

\email{$^1$ binczak@mini.pw.edu.pl, $^2$joanna\_kaleta@sggw.pl}

\keywords{effect algebras, sharp elements, principal elements, isomorphism of effect algebras, totally symmetric quasigroups}

\subjclass[2010]{81P10, 81P15}

\begin{abstract}
In this paper we characterize the effect algebras whose sharp and principal elements coincide. 
We present partially solution to the problems: when the set of sharp (principal) elements is closed under orthosum. We also give examples of two non-isomorphic effect algebras having the same universum, partial order and orthosupplementation.
\end{abstract}

\maketitle

\section{\bf Introduction}
Effect algebras have been introduced by Foulis and Bennet in 1994 (see \cite{FB94}) for the study of foundations of quantum mechanics (see \cite{DP00}). Independtly, Chovanec and K\^opka introduced an essentially equivalent structure called $D$-{\em poset} 
(see \cite{KC94}). Another equivalent structure was introduced by Giuntini and Greuling in \cite{GG94}).

The most important example of an effect algebra is $(E(H),0,I,\oplus)$, where $H$ is a Hilbert space and $E(H)$ consists of all self-adjoint operators $A$ on $H$ such that $0\leq A\leq I$. For $A,B\in E(H)$, $A\oplus B$ is defined if and only if $A+B\leq I$ and then $A\oplus B=A+B$. Elements of $E(H)$ are called {\em effects} and they play an important role in the theory of quantum measurements (\cite{BLM91},\cite{BGL95}).

A quantum effect may be treated as two-valued (it means $0$ or $1$) quantum  measurement that may be unsharp (fuzzy).  If there exist some pairs of effects $a,b$ which posses an orthosum $a\oplus b$ then this orthosum correspond to a  parallel measurement of two effects. 

In this paper we solved the following  Open Problem:  Characterize the effect algebras whose sharp  and principal elements coincide (see~\cite{G96}). So far it was known (see Theorem 3.16 in \cite{BF95}) that if effect algebra $E$ is lattice-ordered then $e\in E$ is principal iff $e\land e'=0$. It also was known that in every effect algebra any principal element is sharp (see Lemma 3.3 in \cite{GFP95}).

\begin{de}
In \cite{FB94} an {\em effect algebra} is defined to be an algebraic system $(E,0,1,\oplus)$  consisting of a set $E$, two special elements $0,1\in E$ called  the {\em zero} and the {\em unit}, and a partially defined binary operation $\oplus$ on $E$ that satisfies  the following conditions for all $p,q,r\in E$:
\be
\item{} [Commutative Law] If $p\oplus q$ is defined, then $q\oplus p$ is defined and $p\oplus q=q\oplus p$.
\item {}[Associative Law] If $q\oplus r$ is defined and $p\oplus(q\oplus r)$ is defined, then $p\oplus q$ is defined, $(p\oplus q)\oplus r$ is defined, and $p\oplus(q\oplus r)=(p\oplus q)\oplus r$.
\item {}[Orthosupplementation Law] For every $p\in E$ there exists a unique $q\in E$ such that $p\oplus q$ is defined and $p\oplus q=1$.
\item {}[Zero-unit Law] If $1\oplus p$ is defined, then $p=0$.
\ee
\end{de}

For simplicity , we often refer to $E$, rather than to $(E,0,1,\oplus)$, as being an effect algebra.

If $p,q\in E$, we say that $p$ and $q$ are orthogonal and write $p\perp q$ iff $p\oplus q$ is defined in $E$. If $p,q\in E$ and $p\oplus q=1$, we call $q$ the {\em orthosupplement} of $p$ and write $p'=q$.

It is shown in \cite{FB94} that the relation $\leq$ defined for $p,q\in E$ by $p\leq q$ iff $\exists r\in E$ with $p\oplus r=q$ is a partial order on $E$ and $0\leq p\leq 1$ holds for all $p\in E$. It is also shown that the mapping $p\mapsto p'$ is an order-reversing involution and that $q\perp p$ iff $q\leq p'$. Furtheremore, $E$ satisfies the following {\em cancellation law}: If $p\oplus q\leq r\oplus q$, then $p\leq r$.

An element $a\in E$ is {\em sharp} if the greatest lower bound  of the set $\{a,a'\}$ equals $0$ (i.e. $a\land a'=0$). We denote the set of sharp elements of $E$ by $S_E$.

An element $a\in E$ is said to be {\em principal} iff  for all $p,q\in E$, $p\perp q$ and $p,q\leq a\Rightarrow p\oplus q\leq a$. We denote the set of principal elements of $E$ by $P_E$.

\begin{de}
For effect algebras $E_1,E_2$ a mapping $\phi\colon E_1\to E_2$ is said to be an {\em isomorphism} if  $\phi$ is a bijection, 
$a\perp b\iff \phi(a)\perp\phi(b)$, $\phi(1)=1$ and $\phi(a\oplus b)=\phi(a)\oplus \phi(b)$.
\end{de}

Let us observe that if $\phi\colon E_1\to E_2$ is an isomorphism then $\phi(0)=0$, because $\phi(0)\oplus 0=\phi(0)=\phi(0\oplus 0)=\phi(0)\oplus\phi(0)$ so by cancellation law $0=\phi(0)$.

\begin{de}
A quasigroup $(Q,\cdot)$ consists of a non-empty set $Q$ equipped with a  one binary operation $\cdot$ such that if any two of $a, b, c$ are given elements of a quasigroup, $ab = c$ determines the third uniquely as an element of the quasigroup.

Moreover  if  $a\cdot b=c\iff  c\cdot a=b$ then $Q$ is called {\em semisymmetric} (see \cite{S97}).
Commutative semisymmetric quasigroups are called {\em totally symmetric} (see \cite{E65}).
\end{de}

\section{Main Theorem}
In order to characterize principal elements we need the following known Theorem:

\begin{thm} \cite[Theorem 3.5]{GFP95}\label{3.5}
If $p,q\in E$, $p\perp q$, and $p\lor q$ exists in $E$, then $p\land q$ exists in $E$, $p\land q\leq(p\lor q)'\leq(p\land q)'$ and
$p\oplus q=(p\land q)\oplus (p\lor q)$.
\end{thm}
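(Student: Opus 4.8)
The plan is to work inside the interval $[0,t]$, where $t=p\oplus q$, and to exploit the relative orthosupplementation available there. First I would observe that $t$ is an upper bound of $\{p,q\}$: since $p\oplus q=t$ we have $p\le t$ and $q\le t$, so the hypothesis that $p\lor q$ exists forces $p\lor q\le t$. Consequently the element $p\lor q$, as well as every lower bound of $\{p,q\}$, already lies in $[0,t]$.

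On $[0,t]$ I would define $x^{*}$ to be the unique element with $x\oplus x^{*}=t$ (existence because $x\le t$, uniqueness by the cancellation law). A short computation with the cancellation law shows that $x\mapsto x^{*}$ is an order-reversing involution of $[0,t]$, and that it interchanges $p$ and $q$ precisely because $p\oplus q=t$. The key step is then the standard fact that an order-reversing involution converts existing joins into existing meets: if $a\lor b$ exists then $a^{*}\land b^{*}$ exists and equals $(a\lor b)^{*}$, via the chain $z\le a^{*}\text{ and }z\le b^{*}\iff a\le z^{*}\text{ and }b\le z^{*}\iff a\lor b\le z^{*}\iff z\le (a\lor b)^{*}$. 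Applying this in $[0,t]$ with $a=p$, $b=q$ (so that $a^{*}=q$, $b^{*}=p$) gives that $p\land q$ exists in $[0,t]$ and $p\land q=(p\lor q)^{*}$. Since all lower bounds of $\{p,q\}$ lie in $[0,t]$, this is also the meet in $E$, and the defining relation $(p\lor q)\oplus(p\lor q)^{*}=t$ is exactly the asserted identity $p\oplus q=(p\land q)\oplus(p\lor q)$.

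It remains to verify the two inequalities. The right-hand one, $(p\lor q)'\le(p\land q)'$, is immediate from $p\land q\le p\lor q$ together with the order-reversing property of $\,'$. For the left-hand one I would set $s=p\lor q$ and compare $s\oplus(p\land q)=t\le 1=s\oplus s'$; cancelling $s$ (after rearranging by commutativity) yields $p\land q\le s'=(p\lor q)'$.

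The main obstacle is establishing the existence of $p\land q$, which the involution argument settles cleanly. The points requiring care are twofold: confirming that the join taken in $E$ genuinely coincides with the join taken in $[0,t]$ (this is why $p\lor q\le t$ is needed), and keeping the relative complement $x\mapsto x^{*}$ on $[0,t]$ distinct from the global orthosupplement $(\cdot)'$ formed with respect to $1$ — the passage between the two is exactly what the final cancellation step accomplishes.
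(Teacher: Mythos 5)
Your argument is correct. Note, however, that the paper does not prove this statement at all: it is quoted verbatim from \cite[Theorem 3.5]{GFP95}, so there is no in-paper proof to compare against. Your derivation via the relative orthosupplementation $x\mapsto x^{*}$ on the interval $[0,t]$ with $t=p\oplus q$ is sound: the existence and uniqueness of $x^{*}$, the fact that it is an order-reversing involution swapping $p$ and $q$, the transfer of the join of $\{p,q\}$ to the meet $(p\lor q)^{*}$ (together with the observation that all lower bounds of $\{p,q\}$ lie in $[0,t]$, so the meet in $[0,t]$ is the meet in $E$), and the final cancellation step $ (p\land q)\oplus s\le s'\oplus s\Rightarrow p\land q\le s'$ are all justified by the axioms and the cancellation law as stated in the introduction. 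This is essentially the standard argument from the cited source.
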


\begin{lemma}\label{lem4}
Let $(E,0,1,\oplus)$ be an effect algebra. If $x\in P_E$, $t\in E$ and $t\leq x$ then there exists $t\lor x'$ in $E$ and
$$t\lor x'=t\oplus x'.$$
\end{lemma}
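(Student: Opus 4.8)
The plan is to prove two things simultaneously: that the orthosum $t\oplus x'$ is defined, and that it is the least upper bound of $\{t,x'\}$. First I would check that $t\oplus x'$ makes sense. Since $t\leq x$ and the map $p\mapsto p'$ is an order-reversing involution satisfying $q\perp p\iff q\leq p'$, from $t\leq x=(x')'$ I obtain $t\perp x'$, so $t\oplus x'$ is defined. It is immediately an upper bound of both arguments: by the very definition of $\leq$ we have $t\leq t\oplus x'$, and using commutativity, $x'\leq x'\oplus t=t\oplus x'$.

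The core of the argument is minimality among upper bounds, and this is where the principality of $x$ enters. Let $u$ be an arbitrary upper bound of $\{t,x'\}$, so $t\leq u$ and $x'\leq u$. Passing to orthosupplements (order reversal) turns these into $u'\leq t'$ and $u'\leq(x')'=x$. The first, via $q\perp p\iff q\leq p'$, says $u'\perp t$, so $u'\oplus t$ is defined; and now both $u'$ and $t$ lie below $x$. This is precisely the hypothesis needed to invoke $x\in P_E$: principality yields $u'\oplus t\leq x$.

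To finish I would convert this back into a statement about $u$. Choose $r$ with $t\oplus r=x$ (possible since $t\leq x$). Then $u'\oplus t\leq x=r\oplus t$, and the cancellation law strips off the common summand $t$ to give $u'\leq r$. It remains to identify $r$: from $t\oplus r=x$ and $x\oplus x'=1$, the commutative and associative laws regroup $(t\oplus r)\oplus x'$ as $(t\oplus x')\oplus r=1$, so by uniqueness of the orthosupplement $r=(t\oplus x')'$. Applying order reversal to $u'\leq r$ then gives $t\oplus x'=r'\leq u$, as required, so $t\oplus x'=t\lor x'$.

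I expect the only delicate point to be the bookkeeping around the orthosum, specifically the regrouping $(t\oplus r)\oplus x'=(t\oplus x')\oplus r$ and the clean application of the cancellation law; these are routine in effect algebras but must be carried out through the commutative and associative laws rather than assumed. Everything else follows directly from the order-reversing involution together with the single, decisive use of the principality of $x$. I note that Theorem~\ref{3.5} is not needed for this direction, since the join is produced constructively as $t\oplus x'$ rather than assumed to exist.
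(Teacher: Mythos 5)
Your proof is correct and follows essentially the same route as the paper: establish $t\perp x'$, note $t\oplus x'$ is an upper bound, and for an arbitrary upper bound $u$ apply principality to $t,u'\leq x$ with $t\perp u'$ to get $t\oplus u'\leq x$. The only difference is in the final conversion to $t\oplus x'\leq u$, which the paper does directly by associativity from $(t\oplus u')\perp x'$, while you take a slightly longer but equally valid detour through the cancellation law and the identification $r=(t\oplus x')'$.
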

\begin{proof}
Suppose that $x\in P_E$.
Let $t\in E$ and $t\leq x$ hence $t\perp x'$. We show that $t\oplus x'$ is the join of $t$ and $x'$. 

Obviously $t\leq t\oplus x'$ and $x'\leq t\oplus x'$. Suppose that $u\in E$, $t\leq u$ and $x'\leq u$ then
$$t\perp u'\eqno(1)$$ and 
$$u'\leq x\quad t\leq x.\eqno(2)$$

Now $(1)$ and $(2)$ implies $t\oplus u'\leq x$ since $x\in P_E$. Hence
$x'\perp (t\oplus u')$ and by associativity $x'\perp t$ and $(x'\oplus t)\perp u'$ thus $t\oplus x'\leq u$ so $t\oplus x'$ is the smallest upper bound of the set $\{t,x'\}$ thus $t\oplus x'=t\lor x'$.
\end{proof}

It turns out that under some conditions every effect algebra satisfies the de Morgan's law.
\begin{lemma}\label{lemdemor}
Let $(E,0,1,\oplus)$ be an effect algebra. If $x,y\in E$ and there exists $x\lor y$ in E then there exists $x'\land y'$ in $E$ and
$$x'\land y'=(x\lor y)'$$
\end{lemma}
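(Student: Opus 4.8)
The plan is to lean entirely on the fact, recalled in the introduction, that the orthosupplementation map $p\mapsto p'$ is an order-reversing involution. This one property should turn every statement about the least upper bound of $\{x,y\}$ into the corresponding statement about the greatest lower bound of $\{x',y'\}$, so the lemma ought to follow almost formally, with no appeal to $\oplus$ at all beyond what is already encoded in $\leq$ and $'$.

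Concretely, I would write $s=x\lor y$ and claim that $s'=(x\lor y)'$ is exactly the meet $x'\land y'$. First I would verify that $s'$ is a lower bound of $\{x',y'\}$: from $x\leq s$ and $y\leq s$, order-reversal of $'$ immediately gives $s'\leq x'$ and $s'\leq y'$. This is the easy half.

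For the harder (but still short) half, I would show $s'$ is the \emph{greatest} lower bound. Let $w\in E$ be any element with $w\leq x'$ and $w\leq y'$. Applying the involution and using $p''=p$, these inequalities become $x\leq w'$ and $y\leq w'$, so $w'$ is an upper bound of $\{x,y\}$. Since $s=x\lor y$ is by hypothesis the \emph{least} upper bound, I get $s\leq w'$, and one final application of order-reversal yields $w=w''\leq s'$. Thus every lower bound of $\{x',y'\}$ lies below $s'$, so $x'\land y'$ exists and equals $s'=(x\lor y)'$.

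I do not expect a genuine obstacle here: the entire content is the interplay between the order-reversing involution and the defining universal properties of $\lor$ and $\land$. The only point requiring care is the bookkeeping of applying $'$ three times and invoking $p''=p$ at the right moments; once the involution is used to dualize "upper bound" into "lower bound," the least/greatest optimality transfers verbatim.
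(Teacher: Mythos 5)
Your proposal is correct and follows essentially the same route as the paper: show $(x\lor y)'$ is a lower bound of $\{x',y'\}$ via order-reversal, then take an arbitrary lower bound $w$, dualize to get $w'$ as an upper bound of $\{x,y\}$, use the minimality of $x\lor y$, and dualize back. No substantive difference from the paper's argument.
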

\begin{proof}
Let $x,y\in E$ and suppose that there exists $x\lor y$ in $E$.

We show that
$$
x'\land y'=(x\lor y)'\eqno(3)
$$

We show that $(x\lor y)'$ is a lower bound of the set $\{x',y'\}$:
$x\leq x\lor y\Rightarrow x'\geq (x\lor y)'$ and $y\leq x\lor y\Rightarrow y'\geq (x\lor y)'$.

If $v$ is a lower bound of $\{x',y'\}$ then $x'\geq v$, $y'\geq  v$ thus $x\leq v'$ and $y\leq v'$ hence
$x\lor y\leq v'$  and $(x\lor y)'\geq v$ and it implies that $(x\lor y)'$ is the greatest lower bound of the set $\{x',y'\}$ so $(3)$ is satisfied.

\end{proof}
Similarly we obtain
\begin{lemma}\label{lemdemor2}
Let $(E,0,1,\oplus)$ be an effect algebra. If $x,y\in E$ and there exists $x\land y$ in E then there exists $x'\lor y'$ in $E$ and
$$x'\lor y'=(x\land y)'$$
\end{lemma}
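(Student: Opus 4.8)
The plan is to mirror the argument of Lemma~\ref{lemdemor} exactly, interchanging the roles of $\land$ and $\lor$ and again leaning on the fact, recorded in the introduction, that $p\mapsto p'$ is an order-reversing involution. Since $x\land y$ is assumed to exist, I will verify directly that $(x\land y)'$ serves as the least upper bound of the set $\{x',y'\}$; by the definition of join this simultaneously establishes the existence of $x'\lor y'$ and the claimed equality $x'\lor y'=(x\land y)'$.

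First I would show that $(x\land y)'$ is an upper bound of $\{x',y'\}$. From $x\land y\leq x$ and $x\land y\leq y$, applying the order-reversing involution yields $(x\land y)'\geq x'$ and $(x\land y)'\geq y'$, as required. Next I would show that $(x\land y)'$ is the \emph{smallest} such upper bound. Suppose $w\in E$ is an arbitrary upper bound of $\{x',y'\}$, so that $x'\leq w$ and $y'\leq w$. Applying the involution reverses these inequalities to $w'\leq x$ and $w'\leq y$, so $w'$ is a lower bound of $\{x,y\}$. Since $x\land y$ is by hypothesis the greatest lower bound, $w'\leq x\land y$, and applying the involution once more gives $(x\land y)'\leq w$.

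I do not anticipate a genuine obstacle, as the whole proof rests on the already-established order-reversing involution together with the defining greatest-lower-bound property of $x\land y$, and the structure is completely dual to the preceding lemma. The only point requiring a little care is to ensure that each inequality is reversed in the correct direction when passing through the involution, and to invoke the universal property of $x\land y$ at exactly the right moment, namely against the lower bound $w'$ rather than against $x$ or $y$ individually.
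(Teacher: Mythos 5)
Your proof is correct and is precisely the dual argument the paper intends by prefacing this lemma with ``Similarly we obtain'': you verify that $(x\land y)'$ is the least upper bound of $\{x',y'\}$ using the order-reversing involution, exactly mirroring the paper's proof of Lemma~\ref{lemdemor}. No issues.
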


\begin{thm}\label{thm:pe}
Let $(E,0,1,\oplus)$ be an effect algebra. Then
$$
P_E=\{x\in E\colon x\in S_E\hbox{ and}\;\forall_{t\in E}t\leq x\Rightarrow t\lor x'\;\hbox{exists in}\;E\}
$$
\end{thm}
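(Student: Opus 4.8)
The plan is to treat this characterization as two inclusions. The inclusion of $P_E$ into the right-hand side is the routine one: if $x\in P_E$ then $x$ is sharp by the known result (Lemma 3.3 in \cite{GFP95}) cited in the introduction, so $x\in S_E$, and the existence of $t\lor x'$ for every $t\leq x$ is exactly the content of Lemma \ref{lem4}. The substance of the theorem, and where I expect all the difficulty to lie, is the reverse inclusion.

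For the reverse inclusion I would fix $x\in S_E$ such that $t\lor x'$ exists whenever $t\leq x$, and prove $x\in P_E$ by taking arbitrary $p,q\in E$ with $p\perp q$ and $p,q\leq x$, aiming at $p\oplus q\leq x$. The key first move is to reformulate the goal as an orthogonality: since $a\leq b'$ is equivalent to $a\perp b$ (as recorded in the preliminaries), the inequality $p\oplus q\leq x=(x')'$ holds if and only if $(p\oplus q)\perp x'$, i.e.\ if and only if $(p\oplus q)\oplus x'$ is defined. This is what lets both hypotheses come into play.

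Next I would construct $p\oplus x'$ and compute its supplement. Because $p\leq x$ we have $p\perp x'$, and by hypothesis $p\lor x'$ exists, so Theorem \ref{3.5} applied to the orthogonal pair $p,x'$ gives $p\oplus x'=(p\land x')\oplus(p\lor x')$. Sharpness enters here: $p\land x'$ is a lower bound of both $x$ and $x'$, hence $p\land x'\leq x\land x'=0$, so $p\oplus x'=p\lor x'$. Then de Morgan (Lemma \ref{lemdemor}) gives $(p\oplus x')'=(p\lor x')'=p'\land x$.

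Finally I would feed $q$ into this. Since $p\perp q$ yields $q\leq p'$, and $q\leq x$ is assumed, the element $q$ is a lower bound of $\{p',x\}$, so $q\leq p'\land x=(p\oplus x')'$, which is to say $q\perp(p\oplus x')$. Associativity then upgrades the defined sum $q\oplus(p\oplus x')$ to $(q\oplus p)\oplus x'=(p\oplus q)\oplus x'$, so $(p\oplus q)\perp x'$ and therefore $p\oplus q\leq x$, as needed. The main obstacle is this backward direction, and within it the two decisive and least obvious steps are the reformulation of $p\oplus q\leq x$ as orthogonality to $x'$ and the use of sharpness to collapse $p\oplus x'$ to the join $p\lor x'$; once the supplement $(p\oplus x')'=p'\land x$ is identified, the reassembly via associativity is purely formal.
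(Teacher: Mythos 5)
Your proposal is correct and follows essentially the same route as the paper's proof: the forward inclusion via the cited sharpness lemma and Lemma \ref{lem4}, and the reverse inclusion by showing $p\land x'=0$ from sharpness, applying Theorem \ref{3.5} to get $p\oplus x'=p\lor x'$, using de Morgan to identify $(p\oplus x')'=p'\land x$, and finishing with associativity. No gaps; the argument matches the paper's step for step (with $u,s$ renamed to $p,q$).
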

\begin{proof}
Suppose that $x\in P_E$ then $x\in S_E$ (see Lemma 3.3 in \cite{GFP95}).
Let $t\in E$ and $t\leq x$ . Then  there exists $t\lor x'$ in $E$ by Lemma \ref{lem4}.

Suppose that $x\in S_E$ and 
$$\forall_{t\in E}t\leq x\Rightarrow t\lor x'\;\hbox{ exists in}\;E.\eqno(4)$$
 We show that $x\in P_E$.

 If $u,s\in E$,  $u\leq x$, $s\leq x$ and $u\perp s$ then
$$u\land x'=0\eqno(5)$$
 because: if $y\leq x'$ and $y\leq u\leq x$ then $y=0$ since $x\land x'=0$.

Moreover $u\leq x$ so $u\lor x'$ exists by $(4)$. By Theorem \ref{3.5}

$$u\oplus x'=(u\land x')\oplus (u\lor x')\nr{(5)}u\lor x'\eqno(6)$$

By Lemma \ref{lemdemor} we have

$$
u'\land x=(u\lor x')'.\eqno(7)
$$

Moreover $s\leq u'$ (since $u\perp s$) and $s\leq x$ so
$s\leq u'\land x$. Hence by $(6)$ and $(7)$ we have

$$
s\leq u'\land x=(u\lor x')'=(u\oplus x')'
$$
so $s\perp(u\oplus x')$ and by associativity $s\oplus u\perp x'$ hence $s\oplus u\leq x$ and $x\in P_E$.

\end{proof}

In the following theorem we prove that in every effect algebra $E$ sharp and principal elements coincide if and only if
there exists in $E$ join of every two orthogonal elements such that one of them is sharp.

\begin{thm}
Let $(E,0,1,\oplus)$ be an effect algebra. Then $S_E=P_E$ if and only if 
$$
\forall_{t,x\in E}(\;t\perp x' \;\hbox{and}\; x\land x'=0)\Rightarrow t\lor x' \;\hbox{exists in}\;E\eqno(8)
$$
\end{thm}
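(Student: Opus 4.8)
The plan is to derive this equivalence directly from the characterization of principal elements in Theorem~\ref{thm:pe}, rather than argue from scratch. The observation that makes the two conditions match up is that, for $t,x\in E$, the orthogonality $t\perp x'$ is the same as the inequality $t\leq x$: by definition $t\perp x'$ holds iff $t\leq(x')'$, and since $p\mapsto p'$ is an involution this is just $t\leq x$. Consequently the displayed condition $(8)$ says precisely that whenever $x$ is sharp (that is, $x\land x'=0$) and $t\leq x$, the join $t\lor x'$ exists in $E$. This is exactly the second clause appearing in the description of $P_E$ furnished by Theorem~\ref{thm:pe}.

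For the forward implication I would assume $S_E=P_E$ and take $t,x\in E$ with $x\land x'=0$ and $t\perp x'$. The first condition means $x\in S_E=P_E$, and the second means $t\leq x$. Since $x\in P_E$, Theorem~\ref{thm:pe} guarantees that $t\lor x'$ exists, which is exactly what $(8)$ asserts.

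For the converse I would assume $(8)$ and prove $S_E=P_E$. Since every principal element is sharp (Lemma~3.3 in \cite{GFP95}), the inclusion $P_E\subseteq S_E$ holds unconditionally, so only $S_E\subseteq P_E$ needs proof. Let $x\in S_E$; to conclude $x\in P_E$ via Theorem~\ref{thm:pe} I must check that $t\lor x'$ exists for every $t\in E$ with $t\leq x$. Fixing such a $t$, the relation $t\leq x$ gives $t\perp x'$, and together with $x\land x'=0$ the hypothesis $(8)$ supplies the join $t\lor x'$. Hence $x\in P_E$ by Theorem~\ref{thm:pe}, which establishes $S_E\subseteq P_E$ and completes the equivalence.

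I expect no serious obstacle, since the statement is essentially a repackaging of Theorem~\ref{thm:pe}: the whole argument rests on the translation $t\perp x'\iff t\leq x$ and on the already-established characterization. The one point I would be careful to record explicitly is that the inclusion $P_E\subseteq S_E$ is invoked in the converse and holds independently of $(8)$, so that only the reverse inclusion actually consumes the hypothesis.
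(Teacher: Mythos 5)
Your proposal is correct and follows essentially the same route as the paper: both directions reduce to Theorem~\ref{thm:pe} via the translation $t\perp x'\iff t\leq x$, with the unconditional inclusion $P_E\subseteq S_E$ handling half of the converse. No issues.
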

\begin{proof}
Suppose that $S_E=P_E$. We show that $(8)$ is satisfied. 

Let $x,t\in E$, $t\perp x'$ and $x\land x'=0$. Then $t\leq x$, $x\in P_E$ and by Theorem \ref{thm:pe} we know that $t\lor x' $ exists in $E$.

Suppose that condition $(8)$ is fullfilled. Obviously $P_E\subseteq S_E$ (see Lemma 3.3 in \cite{GFP95}).

Now our task is to show that $S_E\subseteq P_E$. Let $x\in S_E$. If $t\in E$,  $t\leq x$ then $t\perp x'$ and by condition (8) 
$ t\lor x' $ exists in $E$ hence $x\in P_E$ by Theorem \ref{thm:pe}.  Thus $S_E\subseteq P_E$.
\end{proof}
                                                              
\begin{lemma}\label{lem3}                                                                                                                                    
Let $(E,0,1,\oplus)$ be an effect algebra. If $P_E$ is closed under $\oplus$ (that is, if $x,y\in P_E$ and $x\perp y$, then $x\oplus y\in P_E$) then
$$
\forall_{x,y\in P_E}\quad x\perp y\Rightarrow x'\land(x\oplus y)=y.
$$
\end{lemma}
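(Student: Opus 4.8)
The plan is to fix $x,y\in P_E$ with $x\perp y$, write $z=x\oplus y$, and verify straight from the definition of meet that $y$ is the greatest lower bound of the pair $\{x',z\}$. The hypothesis that $P_E$ is closed under $\oplus$ enters at exactly one point, to guarantee that $z=x\oplus y\in P_E$, and this is the fact on which the whole argument turns.

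First I would check that $y$ is a lower bound of $\{x',z\}$. Since $x\perp y$ we have $y\leq x'$, and trivially $y\leq x\oplus y=z$; thus $y$ lies below both $x'$ and $z$. This is the easy half and uses only the characterization $q\perp p\iff q\leq p'$ of the order-reversing involution.

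Next comes the heart of the argument. I would take an arbitrary lower bound $w$ of $\{x',z\}$, so $w\leq x'$ and $w\leq z$, and aim to show $w\leq y$. From $w\leq x'$ we obtain $w\perp x$. Now $x\leq z$, $w\leq z$, and $x\perp w$, so since $z=x\oplus y\in P_E$ is principal, the defining property of a principal element forces $x\oplus w\leq z=x\oplus y$. Applying the cancellation law to $w\oplus x\leq y\oplus x$ then yields $w\leq y$. Hence $y$ is the greatest lower bound of $\{x',z\}$, that is, $x'\land(x\oplus y)=y$, as desired.

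The main obstacle—indeed essentially the only substantive step—is recognizing that one must invoke the principality of the orthosum $z=x\oplus y$ itself, rather than that of $x$ or $y$ separately, in order to promote $x\oplus w\leq z$ from the two inequalities $x\leq z$ and $w\leq z$; everything surrounding it is bookkeeping with the involution and the cancellation law. The closure hypothesis of the lemma is used precisely here, to supply $z\in P_E$, and nowhere else.
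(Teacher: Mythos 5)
Your proof is correct and follows essentially the same route as the paper's: show $y$ is a lower bound of $\{x',x\oplus y\}$, then take an arbitrary lower bound $w$, use $w\perp x$ together with the principality of $x\oplus y$ (supplied by the closure hypothesis) to get $x\oplus w\leq x\oplus y$, and cancel. No gaps; the identification of where the hypothesis is used matches the paper exactly.
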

\begin{proof}
Let $x,y\in P_E$ and $x\perp y$. Then $y\leq x'$ and $y\leq x\oplus y$ so $y$ is a lower bound of $x'$ and $x\oplus y$.
Let $t$ be a lower bound of $x'$ and $x\oplus y$. We show that $t\leq y$. We know that $t\leq x'$ so $t\perp x$. Moreover
$t\leq x\oplus y$ and $x\leq x\oplus y$ hence 
$$x\oplus t\leq x\oplus y$$
since $x\oplus y\in P_E$. After using cancellation law we obtain $t\leq y$. Hence $y$ is the largest lower bound of $x'$ and $x\oplus y$ so
$ x'\land(x\oplus y)=y$.
\end{proof}

\begin{lemma}\label{lem5}
Let $(E,0,1,\oplus)$ be an effect algebra. If for every $x,y\in S_E$ such that $x\perp y$ there exists $x\lor y$ in $E$ and
$$
x'\land(x\lor y)=y,\quad x\oplus y=x\lor y\eqno(9)
$$
then $S_E$ is closed under $\oplus$  (that is, if $x,y\in S_E$ and $x\perp y$, then $x\oplus y\in S_E$).
\end{lemma}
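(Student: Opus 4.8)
The plan is to show directly that $z := x \oplus y$ is sharp, that is, that $z \land z' = 0$, by proving that every common lower bound of $z$ and $z'$ must equal $0$. Condition (9) is exactly what lets me identify $z$ with the join $x \lor y$ and supplies the crucial meet formula $x' \land z = y$, so the argument will hinge on pushing an arbitrary lower bound $w$ down into $y$ and then invoking the sharpness of $y$.

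First I would fix $x,y \in S_E$ with $x \perp y$ and set $z = x \oplus y$. By condition (9) we have $z = x \lor y$, and therefore $x' \land z = x' \land (x \lor y) = y$. Then I would take an arbitrary $w \in E$ with $w \leq z$ and $w \leq z'$, the goal being to force $w = 0$.

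The only real manipulation occurs next. From $w \leq z' = (x \oplus y)'$ we obtain $w \perp (x \oplus y)$. Applying the associative law to $w \oplus (x \oplus y)$ yields that $w \oplus x$ is defined, so $w \perp x$ and hence $w \leq x'$. Using commutativity to rewrite $x \oplus y = y \oplus x$ and repeating the same step gives $w \perp y$, hence $w \leq y'$ as well. Now combining $w \leq x'$ with $w \leq z$ shows that $w$ is a lower bound of $\{x',z\}$, so $w \leq x' \land z = y$. Together with $w \leq y'$ this forces $w \leq y \land y' = 0$ by sharpness of $y$, whence $w = 0$. Therefore $z \land z' = 0$ and $x \oplus y \in S_E$, as required.

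I expect the associativity extraction, namely turning $w \perp (x \oplus y)$ into $w \perp x$ and $w \perp y$, to be the only step requiring genuine care, since it is the one place where the axioms of the partial operation are used rather than purely order-theoretic bookkeeping. Everything afterwards is a direct consequence of the meet identity in condition (9) and the sharpness of $y$; notably, the symmetric identity $y' \land z = x$ is not even needed, so the proof stays short.
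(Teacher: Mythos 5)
Your proof is correct. It performs the same essential reduction as the paper --- sharpness of $x\oplus y$ is traced back to sharpness of $y$ via the meet identity $x'\land(x\lor y)=y$ from (9) --- but by a different mechanism. The paper's proof is a short equational chain, $0=y'\land y=y'\land(x'\land(x\lor y))=(x\lor y)'\land(x\lor y)=(x\oplus y)'\land(x\oplus y)$, whose middle step invokes the de Morgan law of Lemma~\ref{lemdemor} to rewrite $y'\land x'$ as $(x\lor y)'$. You instead work element-wise with an arbitrary common lower bound $w$ of $x\oplus y$ and $(x\oplus y)'$, and you obtain $w\leq x'$ from $w\perp(x\oplus y)$ by the associative law of $\oplus$ rather than by de Morgan; then $w\leq x'$ and $w\leq x\lor y$ give $w\leq y$, which together with $w\leq y'$ forces $w=0$. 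This makes your argument slightly more self-contained (Lemma~\ref{lemdemor} is never used) and also makes explicit the partial-meet manipulations that the paper's chain of equalities leaves implicit, at the cost of some length. Both arguments use only the sharpness of $y$ and the first identity in (9); as you observe, the symmetric identity $y'\land(x\lor y)=x$ is not needed in either. One cosmetic remark: the step deriving $w\leq y'$ is actually redundant in the paper's version but genuinely needed in yours, since you pass through $w\leq y$ and must pair it with $w\leq y'$ to invoke $y\land y'=0$.
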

\begin{proof}
Let $x,y\in S_E$ and $x\perp y$. By (9) we have
$$
\begin{array}{rl}
0&{}=y'\land y=y'\land(x'\land(x\lor y))\nr{Lemma\;\ref{lemdemor}}(x\lor y)'\land(x\lor y)\vspace{5mm}\\
&{}=(x\oplus y)'\land (x\oplus y),
\end{array}
$$ so $x\oplus y\in S_E$ and $S_E$ is closed under $\oplus$.
\end{proof}

In the following Theorem we show that if $S_E=P_E$ then $S_E=P_E$ is closed under $\oplus$ if and only if elements in $S_E=P_E$ satisfy
the orthomodular law. It partially solves Open problems 3.2 and 3.3 in \cite{G96}.
\begin{thm}
Let $(E,0,1,\oplus)$ be an effect algebra such that $S_E=P_E$. Then $S_E=P_E$ is closed under $\oplus$ if and only if for every
$x,y\in S_E$ we have
$$
x\leq y\Rightarrow x\lor(x'\land y)=y.
$$
\end{thm}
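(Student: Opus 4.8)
The plan is to prove the equivalence under the standing hypothesis $S_E=P_E$ by treating the two implications separately: Lemma~\ref{lem3} will drive the forward implication and Lemma~\ref{lem5} the backward one, with the de Morgan laws (Lemmas~\ref{lemdemor} and~\ref{lemdemor2}) serving as the bridge between the meet formulation and the join (orthomodular) formulation. Throughout I will use that $S_E$ is closed under orthosupplementation $a\mapsto a'$, since $a\land a'=0$ is equivalent to $a'\land a''=0$; hence for $x,y\in S_E=P_E$ both $x'$ and $y'$ again lie in $P_E$.

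For the implication ``closed $\Rightarrow$ orthomodular'', I would start from $x,y\in S_E$ with $x\le y$. Then $x\perp y'$ and $y'\in P_E$, so, assuming $P_E$ is closed under $\oplus$, Lemma~\ref{lem3} applied to the orthogonal pair $x,y'$ yields $x'\land(x\oplus y')=y'$. Applying Lemma~\ref{lemdemor2} to this meet converts it into the join $x\lor(x\oplus y')'=y$. It then remains to identify $(x\oplus y')'$ with $x'\land y$: by Lemma~\ref{lem4} (used with $x'\in P_E$ and $t=y'\le x'$) one has $x\lor y'=x\oplus y'$, whence Lemma~\ref{lemdemor} gives $(x\oplus y')'=(x\lor y')'=x'\land y$. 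Substituting this back produces exactly $x\lor(x'\land y)=y$.

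For the converse ``orthomodular $\Rightarrow$ closed'', I would verify the three hypotheses of Lemma~\ref{lem5} for an arbitrary orthogonal pair $x,y\in S_E$ and then invoke that lemma. The existence of $x\lor y$ together with $x\oplus y=x\lor y$ is automatic: Lemma~\ref{lem4} with $x'\in P_E$ and $t=y\le x'$ gives $x\lor y=x\oplus y$. The only genuine content is the meet identity $x'\land(x\lor y)=y$, and this is where the orthomodular hypothesis enters. Since $x\le y'$, I would apply the orthomodular law to the pair $x\le y'$ to get $x\lor(x'\land y')=y'$; using Lemma~\ref{lemdemor} (which, from the already-established join $x\lor y$, gives $x'\land y'=(x\lor y)'$) this becomes $x\lor(x\lor y)'=y'$; a second application of Lemma~\ref{lemdemor} to this join then yields $x'\land(x\lor y)=y$. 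With all three hypotheses of Lemma~\ref{lem5} in hand, $S_E$ is closed under $\oplus$.

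The step I expect to be the main obstacle is the meet identity $x'\land(x\lor y)=y$ in the backward direction: it is the one place where the orthomodular law is actually consumed, and it requires chaining the orthomodular equation through two de Morgan dualizations while keeping careful track of which joins and meets are guaranteed to exist at each stage (the existence being supplied automatically by Lemmas~\ref{lemdemor} and~\ref{lem4} rather than assumed). The forward direction, by contrast, is essentially a formal dualization of Lemma~\ref{lem3}, and the only care needed there is the correct bookkeeping of orthosupplements when passing between the meet and join forms.
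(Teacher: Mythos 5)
Your proposal is correct and follows essentially the same route as the paper: Lemma~\ref{lem3} plus the de Morgan lemmas and Lemma~\ref{lem4} for the forward implication, and Lemma~\ref{lem5} with the orthomodular hypothesis for the converse. The only differences are cosmetic (e.g.\ which element you treat as the principal one when invoking Lemma~\ref{lem4}, and the order of the de Morgan substitutions), and your explicit remark that $S_E$ is closed under orthosupplementation is a point the paper uses tacitly.
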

\begin{proof}
Suppose that $S_E=P_E$ is closed under $\oplus$. Let $x,y\in S_E$ and $x\leq y$ then $x\perp y'$ and by Lemma \ref{lem3}
we have $x'\land(x\oplus y')=y'$ since $y'\in S_E$. It follows that 
$$
\begin{array}{rl}
y&{}\nr{Lemma\;\ref{lemdemor2}}x\lor(x\oplus y')'\nr{Lemma\;\ref{lem4}}x\lor(x\lor y')' \vspace{5mm}\\
&{}\nr{Lemma\;\ref{lemdemor2}}x\lor(x'\land y).
\end{array}
$$

Suppose that for every
$x,y\in S_E$ we have 
$$x\leq y\Rightarrow x\lor(x'\land y)=y\eqno(10).$$ 
We show that  for every $x,y\in S_E$ such that $x\perp y$ there exists $x\lor y$ in $E$ and
$$
x'\land(x\lor y)=y,\quad x\oplus y=x\lor y.
$$
Let  $x,y\in S_E$ and $x\perp y$. Then $x\leq y'$ and $y'\in S_E=P_E$, so $x\lor(y')'=x\lor y$ exists in $E$  and
$x\lor y=x\oplus y$ by Lemma \ref{lem4}. Moreover $x\lor(x'\land y')=y'$ by (10). Hence $x'\land(x'\land y')'=y$ by Lemma \ref{lemdemor}
and $x'\land (x\lor y)=y$ by Lemma \ref{lemdemor2}. Therefore $S_E$ is closed under $\oplus$ by Lemma \ref{lem5}.

\end{proof}

Let us observe that by Theorem \ref{thm:pe} principal elements in an effect algebra are determined by partial order $\leq$ and orthosupplementation $'$. We will see that there exist effect algebras $E_1=(E,0,1,\oplus_1)$ and $E_2=(E,0,1,\oplus_2)$ such that  orthosupplementation $'$  in $E_1$ and orthosupplementation $'$ in $E_2$ are equal and also the same is true for partial order $\leq$, but $E_1$ and $E_2$ are not isomorphic.

\begin{de}
Let $(Q,\cdot)$ be a totally symmetric quasigroup.

We define $E(Q,\cdot):=\biggl((Q\times\{0\})\cup (Q\times\{1\})\cup\{0\}\cup\{1\},0,1,\oplus\biggr)$ where 
\begin{itemize}
\item $(q_1,0)\oplus(q_2,0)=(q_1\cdot q_2,1)$ for all $q_1,q_2\in Q$,
\item $(q,0)\oplus(q,1)=(q,1)\oplus(q,0)=1$ for all $q\in Q$,
\item $0\oplus x=x\oplus 0=x$ for all $x\in (Q\times\{0\})\cup (Q\times\{1\})\cup\{0\}\cup\{1\}$.
\end{itemize}
In  the remaining cases orthosum $x\oplus y$ is not defined.
\end{de}

\begin{thm}\label{thm:q}
If $(Q,\cdot)$ is  a totally symmetric quasigroup then $E(Q,\cdot)$ is an effect algebra.
\end{thm}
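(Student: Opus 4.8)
The plan is to verify directly the four effect-algebra axioms for the structure $E(Q,\cdot)$, running through the four kinds of elements $0$, $1$, $(q,0)$, $(q,1)$ case by case. Since $0$ acts as a two-sided identity for $\oplus$, every configuration in which one of the elements involved equals $0$ collapses to a triviality; and, by the defining clauses, any configuration forcing an element to equal $1$ forces its partner to be $0$. So throughout I expect to reduce quickly to the genuinely nontrivial cases, namely those in which all relevant elements lie in $(Q\times\{0\})\cup(Q\times\{1\})$. A preliminary remark is that $\oplus$ is single-valued, since the supplementary clause applies only to one level-$0$ and one level-$1$ element while the product clause applies to two level-$0$ elements, so no pair gets two values.

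First I would check the commutative law. The only nontrivial defined sums are $(q_1,0)\oplus(q_2,0)=(q_1\cdot q_2,1)$ and the supplementary sums $(q,0)\oplus(q,1)=1$; commutativity of $\oplus$ on the first reduces exactly to commutativity of the quasigroup operation, $q_1\cdot q_2=q_2\cdot q_1$, and on the second it holds by fiat. Next, for the zero–unit law I inspect the definition and note that the only defined sum having $1$ as an operand is $1\oplus 0=1$, so if $1\oplus p$ is defined then $p=0$. For the orthosupplementation law I would exhibit the orthosupplements explicitly, $0'=1$, $1'=0$, $(q,0)'=(q,1)$ and $(q,1)'=(q,0)$, and establish uniqueness by listing the few sums that can equal $1$: for $(q,0)$ the only one is $(q,0)\oplus(q,1)$, and symmetrically for $(q,1)$, since $(q,1)\oplus(b,0)$ is undefined for $b\neq q$.

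The main obstacle is the associative law, which I would settle by enumerating the triples $(p,q,r)$ for which both $q\oplus r$ and $p\oplus(q\oplus r)$ are defined. After discarding the cases in which some operand is $0$ or is forced to be $1$, the single essential case is $q=(a,0)$, $r=(b,0)$, $p=(a\cdot b,0)$, where $q\oplus r=(a\cdot b,1)$ and hence $p\oplus(q\oplus r)=1$. Here I must verify that $(p\oplus q)\oplus r$ is defined and also equals $1$; computing $p\oplus q=\bigl((a\cdot b)\cdot a,1\bigr)$, this comes down precisely to the identity $(a\cdot b)\cdot a=b$, after which $(p\oplus q)\oplus r=(b,1)\oplus(b,0)=1$. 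That identity is exactly semisymmetry, $a\cdot b=c\iff c\cdot a=b$, so total symmetry of $(Q,\cdot)$ is precisely what makes the key case work and the associative law hold. Assembling the four verifications then shows that $E(Q,\cdot)$ is an effect algebra.
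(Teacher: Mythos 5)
Your proof is correct and follows essentially the same route as the paper: the commutative, zero--unit and orthosupplementation laws are checked directly, and associativity is reduced to the single nontrivial triple of level-$0$ elements, where the required identity $(a\cdot b)\cdot a=b$ is exactly the semisymmetry of $(Q,\cdot)$. The only additions (single-valuedness of $\oplus$ and the explicit uniqueness check for orthosupplements) are harmless refinements of the same argument.
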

\begin{proof}
The Commutative Law and Zero-unit Law are obvious. If $q\in Q$ then there exists a unique element $x=(q,1)$ such that $(q,0)\oplus x=1$ so $(q,0)'=(q,1)$. Similarly $(q,1)'=(q,0)$ so the Orthosupplementation Law is satisfied.

 It remains to show that the Associative Law
is also fullfilled. Let $x,y,z\in(Q\times\{0\})\cup (Q\times\{1\})\cup\{0\}\cup\{1\}$. If $x=0$ or $y=0$, or $z=0$ then The Asociative Law is true. If $y\oplus z$ is defined and $x\oplus(y\oplus z)$ is defined and $x,y,z\not=0$ then $x,y,z\in Q\times\{0\}$, so there exist  $p,q,r\in Q$ such that $x=(p,0)$, $y=(q,0)$, $z=(r,0)$, so $(q,0)\oplus (r,0)$ is defined and $(p,0)\oplus((q,0)\oplus(r,0))$ is defined, then 
$(p,0)\oplus(q\cdot r,1)$ is defined so $q\cdot r=p$ hence $p\cdot q =r$ thus $(p\cdot q,1)\oplus(r,0)$ is defined so
$((p,0)\oplus(q,0))\oplus (r,0)$ is defined and $(p,0)\oplus ((q,0)\oplus (r,0))=((p,0)\oplus(q,0))\oplus(r,0)=1$. Therefore 
$(x\oplus y)\oplus z$ is defined and $x\oplus (y\oplus z)=(x\oplus y)\oplus z=1$.

\end{proof}

\begin{example}
Let $Q=\{1,2,3\}$ and 
\bigskip

$\begin{array}{c|ccc}
\cdot_1&1&2&3\\\hline
1&1&3&2\\
2&3&2&1\\
3&2&1&3
\end{array}\quad$
$\begin{array}{c|ccc}
\cdot_2&1&2&3\\\hline
1&2&1&3\\
2&1&3&2\\
3&3&2&1
\end{array}$ 
\bigskip

then  $E(Q,\cdot_1)$ and $(Q,\cdot_2)$ are  totally symmetric quasigroups (see Example 2 and 3 in \cite{E65}). Then by Theorem \ref{thm:q} $E(Q,\cdot_1)$ and
$E(Q,\cdot_2)$ are effect algebras with the following $\oplus$ tables. In this tables we do not include $0$ and $1$, since they have trivial sums and a dash means that the corresponding $\oplus$ is not defined:
\bigskip

$\begin{array}{c|cccccc}
\oplus_1&a_1&a_2&a_3&a_1'&a_2'&a_3'\\\hline
a_1&a_1&a_3&a_2&1&-&-\\
a_2&a_3&a_2&a_1&-&1&-\\
a_3&a_2&a_1&a_3&-&-&1\\
a_1'&1&-&-&-&-&-\\
a_2'&-&1&-&-&-&-\\
a_3'&-&-&1&-&-&-
\end{array}$
\bigskip

$\begin{array}{c|cccccc}
\oplus_2&a_1&a_2&a_3&a_1'&a_2'&a_3'\\\hline
a_1&a_2&a_1&a_3&1&-&-\\
a_2&a_1&a_3&a_2&-&1&-\\
a_3&a_3&a_2&a_1&-&-&1\\
a_1'&1&-&-&-&-&-\\
a_2'&-&1&-&-&-&-\\
a_3'&-&-&1&-&-&-
\end{array}$
\bigskip

where $a_i=(i,0)$ and $a_i'=(i,1)$ for $i=1,2,3$. In effect algebras $E(Q,\cdot_1)$ and $E(Q,\cdot_2)$ partial order $\leq$ is the same:
$a_1,a_2,a_3$ are minimal nonzero elements, $a_1',a_2',a_3'$ are maximal elements not equal to $1$, moreover $a_i\leq a_j'$ for all
$i,j\in\{1,2,3\}$. Obviously  orthosupplementation $'$ is the same in both effect algebras mentioned above. But $E(Q,\cdot_1)$ and $E(Q,\cdot_2)$ are not isomorphic:

Suppose that a mapping $\phi\colon (Q\times\{0\})\cup (Q\times\{1\})\cup\{0\}\cup\{1\}\to (Q\times\{0\})\cup (Q\times\{1\})\cup\{0\}\cup\{1\}$ is an isomorphism of $E(Q,\cdot_1)$ onto $E(Q,\cdot _2)$. Then
$$\phi(a_1)\oplus_2\phi(a_1)=\phi(a_1\oplus_1 a_1)=\phi(a_1)=\phi(a_1)\oplus_20$$

so $\phi(a_1)=0$, but $\phi(0)=0$ hence $a_1=0$ and we obtain a contradiction.

So in fact effect algebras $E(Q,\cdot_1)$ and $E(Q,\cdot_2)$ are not isomorphic and it follows that in some effect algebras partial order
$\leq$, and orthosupplementation $'$ do not determine $\oplus$.

\end{example}

\newpage


\begin{thebibliography}{99}
\bibitem{BF95}  Bennett M. K.,  Foulis D. J. Phi-Symmetric Effect Algebras, Foundations of Physics. {\bf 25}, No. 12, 1995, 1699-1722.

\bibitem{BLM91} Bush P., Lahti P.J., Mittelstadt P. The Quantum Theory of Measurement Lecture Notes in Phys. New Ser. m2, Springer-Verlag, Berlin, 1991

\bibitem{BGL95}  Bush P., Grabowski M., Lahti P.J Operational Quantum Physics, Springer-Verlag, Berlin, 1995

\bibitem{DP00} Dvure\v censkij A., Pulmannov\'a New Trends in Quantum Structures, Kluwer Academic Publ./Ister Science, Dordrecht-Boston-London/Bratislava, 2000.

\bibitem{FB94}   Foulis D. J., Bennett M. K.  Effect Algebras and Unsharp quantum Logics, Foundations of Physics. {\bf 24}, No. 10, 1994, 1331-1351.

\bibitem{GG94} Giuntini R., Grueuling H., Toward a formal language for unsharp properties, Found. Phys, {\bf 19}, 1994, 769-780.

\bibitem{GFP95} Greechie R. J.,Foulis D. J., Pulmannov\'a S. The center of an effect algebra, Order {\bf 12}, 91-106, 1995.

\bibitem{G96}  Gudder S. Examples, Problems, and Results in Effect Algebras, International Journal of Theoretical Physics, {\bf 35}, 2365-2375, 1996. 

\bibitem{KC94} K\^opka F.,  Chovanec F., $D$-posets, Math. Slovaca, {\bf 44}, 1994, 21-34.

\bibitem{S97} Smith J. D. H., Homotopy and semisymmetry of quasigroups, alg. Univ., {\bf 38} , 1997, 175-184.

\bibitem{E65} Etherington I. M. H., Quasigroups and cubic curves, Proceedings of the Edinburgh Mathematical Society (Series 2), 
 Volume {\bf 14}, Issue 04,  December 1965, 273-291.
\end{thebibliography}
\end{document}